\documentclass[11pt]{article}
\usepackage{authblk}
\usepackage{hyperref}
\usepackage{graphicx}

\usepackage{latexsym}
\usepackage{amsfonts, amsmath}
\usepackage{amssymb}
\usepackage{theorem}
\usepackage{longtable}
\usepackage{soul}
\usepackage[usenames]{color}
\usepackage{times}
\usepackage{tikz}

\usepackage[normalem]{ulem}
\usepackage{cancel}

\usepackage{xcolor}
\usetikzlibrary{matrix, positioning}

\usepackage{comment}
\usepackage{pgfplots}
\pgfplotsset{compat=1.15} 
\usetikzlibrary{patterns}
\usepackage{enumitem}
\usepackage{bbm}
\usepackage{multicol}
\usepackage{float}
\usepackage{setspace}
\usepackage[natbibapa]{apacite}

\usepackage[margin=2cm]{geometry}

\newcommand{\SVD}{\texttt{SVD}}

\newtheorem{theorem}{Theorem}[section]
\newtheorem{definition}[theorem]{Definition}
\newtheorem{remark}[theorem]{Remark}
\newtheorem{proposition}[theorem]{Proposition}
\newtheorem{corollary}[theorem]{Corollary}
\newtheorem{lemma}[theorem]{Lemma}

\newtheorem{example}[theorem]{Example}

\def\ND{\succsim\kern-11pt/\kern5pt}

\newenvironment{proof}
{\begin{trivlist}\item[]{Proof:}}{\hfill{$\square$}\noindent\end{trivlist}}

\title{Principal Component Analysis and Power Indices}

\author[1,2]{Xavier Molinero}
\author[1,2]{Enric Monsó}
\author[1,3]{Daniel Samaniego}

\affil[1]{Department of Mathematics (MAT)\\
Universitat Politècnica de Catalunya - BarcelonaTech (UPC)}

\affil[2]{Campus Terrassa, Edif. TR5, Colom 11, E-08222 Terrassa, Spain}
\affil[3]{Campus Manresa, Edif. MN3, Bases de Manresa, 61-73, E-08242 Manresa, Spain}

\date{}

\begin{document}
\maketitle

\begin{abstract}
Measuring the influence of a player in a simple game is a widely studied topic. Shapley-Shubik power index is perhaps the maximum exponent in terms of relevance. Furthermore, other power indexes have been proposed over time. In this paper, we propose yet another power index, defined in terms of winning coalitions. We show that this index coincides with eigenvalues obtained with the Principal Component Analysis method, a broadly used technique in data science to determine the influence of different features given a dataset. Furthermore, we provide a characterization of this proposed index in terms of four properties.  
 \medskip
  
  \noindent
  \textbf{Keywords:} game theory; simple games; influence measures; power index; principal component analysis, singular value decomposition.
  
\end{abstract}

\section{Introduction and Preliminaries}
The notion of measuring power is widely applied in many real world situations as well as broadly studied in the literature~\citep{Taylor1996,Keijzer2008,Bolus2011}.
In recent years, many applications in Game Theory have been used in the machine learning context and vice-versa ~\citep{Odu2024,McBride2026, SperryTaylor2027}. The main approach links these two fields considering the features in a machine learning problem as players in a game, as well as the target feature of the problem as the value of the characteristic function of the game.

In this article, we present a power index for simple games. This power index assigns to each player the corresponding contribution achieved by the Singular Vector Decomposition method (Theorem \ref{thm:svd}) when representing the game through a matrix in a certain way. As shown in the Theorem, the computation of the index is easily obtained in terms of the winning coalitions of the game, while the Singular Value Decomposition in other contexts can derive into complex expressions in terms of the initial information (see \cite{Ri89}) or directly into numerical solutions (see \cite{GoKa65}). 

A frequent situation in machine learning is related to measure the influence of different features given a dataset. One approach to do so is the Principal Component Analysis, PCA. This approach  was introduced from a geometrical point of view by \cite{Pe1901}. The technique involves Singular Value Decomposition, SVD, and uses its eigenvalues to measure the influence of the stated features \citep{Ja91,Joll02,joll16} among others). Furthermore, another circumstance where SVD is relevant is when it is of interest to measure the features' influence \emph{over a target} \citep{ba06,bar11} among others). 
These characterizations enrich the literature focused on characterizing power indices using a transfer property (\citep{Du75,DuSh79,Freixas2011,FrSa23}.

\subsection{Notions on simple games}

Let $N=\{1,2,3,\dots,n\}$ be the set of \textit{players}. Also denote $[n]=\{1,\dots,n\}$. Every subset of players $ S\subseteq N$ is referred as a \textit{coalition} and can be noted as an $n$-vector $\vec{s}\in\{0,1\}^n$ whereas $s_i=1$ when player $i$ belongs to the coalition $S$ and $s_i=0$ when player $i$ does not belong to $S$(where $s_i=1$ indicates that player $i$ belongs to the coalition $S$ while $s_i=0$ indicates that play $i$ does not belong to $S$) . As it is customary in the literature $2^N=\{S \ | \ S\subseteq N\}$ is the set of all coalitions. A \textit{simple game} is a pair $\Gamma=(N,v)$ where 
\begin{align*}
  v \colon 2^N &\to \{0,1\}\\
  S &\mapsto v(S)
\end{align*}
indicates whether a coalition wins the game (i.e. $v(S)=1$) or does not (i.e. $v(S)=0$), and satisfies the following properties: (en una enumeration millor?)
\begin{enumerate}
    \item $v(\emptyset)=0,$ (void coalition has to lose the game);
    \item $v(N)=1,$ (grand coalition wins the game for sure);
    \item $\forall{}S\subseteq{}T\subseteq{}N$ then $v(S)\leq v(T)$,  the \textit{monotonicity} property (incorporation of players do not diminish chances to win the game).
\end{enumerate}

For a comprehensive introduction to simple games, we refer to \cite{taylor_zwicker_book}.

Given a simple game $\Gamma=(N,v)$, let $\mathcal{W}(\Gamma)$ be the set of all winning coalitions. It is customary to define a game $\Gamma$ by listing the coalitions in $\mathcal{W}$. For $i\in[n]$, let $\mathcal{W}_i(\Gamma)$ be the set of all winning coalitions containing player $i$, that is,
$\mathcal{W}_i(\Gamma)=\{S\in\mathcal{W} \ | \ i\in{}S\}$, 
and
$\mathcal{W}^i(\Gamma)$ denotes the set of winning coalitions of size $i$, 
$\mathcal{W}^i(\Gamma)=\{S \in \mathcal{W}(\Gamma) \mid |S| = i\}$.
Clearly, $\mathcal{W}(\Gamma)=\bigcup_{i\in[n]} \mathcal{W}^i(\Gamma)$, and these sets are pairwise disjoint.
A winning coalition which strictly does not contain another winning coalition is called \textit{minimal winning coalition}. Finally, $\mathcal{W}^m(\Gamma)=\{S\in\mathcal{W} \ | \ v(T)=0,\forall{}T\subsetneq S\}$ denotes the set of all minimal winning coalitions.

If there is no ambiguity, we simply use the notation $\mathcal{X}$ to refer to $\mathcal{X}(\Gamma)$, where
$\mathcal{X} \in \{\mathcal{W}, {\cal W}_i, {\cal W}^i, {\cal W}^m\}$, for $i\in[n]$.

To classify the players of the game according to its influence in the game, \cite{Is58} introduced the following relation.  Let $\tau_{ij}:N\rightarrow N$ be the transposition of players $i,j\in N$. We say that $i$ is \emph{at least as desirable as} $j$, $i\succsim j$, if and only if $\tau_{ij}(\mathcal{W}_j)\subseteq\mathcal{W}_i$.
We say that both players are \emph{equally desirable} when $i\succsim j$ and $j\succsim i$ and we denote it by $i\simeq j$.
In other words, replacing player $j$ in any winning coalition by player $i$, maintains the coalition winning. The relation $\succsim$ is a preorder.

A \emph{null player} $i$ is a player whose decision does not affect the outcome of the game, formally $v(S\cup \{i\})=v(S)$ for any coalition $S\subseteq N\backslash\{i\}$. A \emph{dummy player} $i$ satisfies  $v(S\cup \{i\})-v(S)=v(\{i\})$ for any coalition $S\subseteq N\backslash\{i\}$. We say a player $i$ is a \emph{blocker} (or a \emph{vetoer}) if it belongs to any winning coalition, that is $i\in S$ for all $S\in\mathcal{W}$.

As usual, given two simple games $\Gamma_1=(N,v)$ and $\Gamma_2=(N,w)$, we define the union and the intersection as $\Gamma_1\cup\Gamma_2=(N,v\cap w)$ and $\Gamma_1\cup\Gamma_2=(N,v\cup w)$, respectively, where $\forall{}S\subseteq{}N$
\begin{equation*}
    (v \cap w)(S)=\max(v(S),w(S))\  \ \ \ \text{ and }\  \ \ \ (v \cup w)(S)=\min(v(S),w(S))
\end{equation*}
Moreover, if $\pi$ is a permutation of $N$, we define $\pi{}v(S)=v(\pi^{-1}(S))$.

Let $\Gamma=(N,v)$ be a simple game. We define the minimum size of a winning coalition as $k(\Gamma)=\min\limits_{S\in \mathcal{W}} |S|$, and the number of winning coalitions of cardinality $k(\Gamma)$ by $p(\Gamma)=|\{S\in\mathcal{W} \ | \ |S|=k(\Gamma)\}|$.

\begin{definition}\label{def:matrixrepresentation}
Given a simple game $\Gamma=(N,v)$, we define $\mathsf{V}(\Gamma)=[\mathsf{S}\,|v(\mathsf{S})] \in {\cal M}^{2^n\times(n+1)}$ as its matrix representation in the following way.
Each row corresponds to a coalition $S\subseteq{}N$, where the first $n$ columns coincide with the vector $\vec{s}\in\{0,1\}^n$ associated with $S$ and the value of the last column is $v(S)$.
Moreover, by convention, we order the rows in lexicographical order. We denote by $\mathcal{V}^{2^n\times(n+1)}$ the set of all the matrices of this kind.
\end{definition}

\begin{example}\label{ex:4pl}
    Let $\Gamma=(N,v)$ be a simple game defined by $N=\{1,2,3,4\}$, and $v$ indicating that coalitions in 
    $\mathcal{W}=\{\{1\},\{1,2\},\{1,3\},\{1,4\},\{2,3\},\{1,2,3\},\{1,2,4\},\{1,3,4\},\{2,3,4\},\{1,2,3,4\}\}$ are the winning coalitions.
    Hence, the corresponding matrix representation is
    \[\textsf{V}(\Gamma)=
\left(
{\scriptsize
\begin{array}{cccc|c}
0&0&0&0&0\\
0&0&0&1&0\\
0&0&1&0&0\\
0&0&1&1&0\\
0&1&0&0&0\\
0&1&0&1&0\\
0&1&1&0&1\\
0&1&1&1&1\\
1&0&0&0&1\\
1&0&0&1&1\\
1&0&1&0&1\\
1&0&1&1&1\\
1&1&0&0&1\\
1&1&0&1&1\\
1&1&1&0&1\\
1&1&1&1&1\\
\end{array}
}
\right)\in\mathcal{V}^{2^4\times(4+1)}\ .
\]
\end{example}

\begin{remark}
\label{remark:1-0}
For each $j\in[n]$, the $j$-th column has mean and standard deviation equal to $1/2$, denoted by 
$\mu_{\bullet j}(\textsf{V}(\Gamma))$ and $\sigma_{\bullet j}(\textsf{V}(\Gamma))$, respectively. 
Furthermore, each of these columns contains exactly $2^{n-1}$ ones and $2^{n-1}$ zeros. 
On the other hand, the last column contains $|\mathcal{W}(\Gamma)|$ ones, corresponding to the winning coalitions, and 
$2^{N} - |\mathcal{W}(\Gamma)|$ zeros, corresponding to the non-winning coalitions.
\end{remark}

\begin{remark}
Two different simple games will have different matrix representations. However, not any matrix ${\cal M}^{2^n\times(n+1)}$ is the representation of a simple game. For example, 
\[
\left(
{\scriptsize
\begin{array}{ccc|c}
0&0&0&0\\
0&0&1&0\\
0&1&0&0\\
0&1&1&1\\
{\bf 1}&{\bf 0}&{\bf 0}&{\bf 1}\\
1&0&1&1\\
{\bf 1}&{\bf 1}&{\bf 0}&{\bf 0}\\
1&1&1&1
\end{array}
}
\right)\notin\mathcal{V}^{2^3\times(3+1)}\ .
\]
does not correspond to a simple game as it does not respect monotonicity. For instance, for this game, coalition $\{1\}$ is winning, while $\{1,2\}$ is non-winning.

Determining how many of the $2^{2^n}$ possible choices for the matrix last column come from a simple game is equivalent to count how many Boolean functions are there. This is an open problem only solved up to $n\leq 9$ (see, for instance, \cite{JA23} and \cite{Hi24} for the last known results, and \href{https://oeis.org/A000372}{A000372} at the On-Line Encyclopedia of Integer Sequences (OEIS) for the whole known sequence).
\end{remark}

Prior to defining the orthogonal (normalized) matrix of a given simple game in accordance with the standardization criteria of~\cite{Kut2005}, we state the following.

\begin{remark}
\label{remark:wij}
Given a simple game $\Gamma=(N,v)$ and its matrix representation $\mathsf{V}(\Gamma)=[v_{ij}]\in{\cal V}^{2^n\times(n+1)}$, for $i\in[2^n]$ and $j\in[n]$, we have
\[
\widehat{w}_{ij} :=
\frac{v_{ij} - \mu_{\bullet j}(\textsf{V}(\Gamma))}
     {\sigma_{\bullet j}(\textsf{V}(\Gamma))}
=
\begin{cases}
  \;\;1, & \text{if } v_{ij} = 1,\\[4pt]
 -1, & \text{if } v_{ij} = 0.
\end{cases}
\]
Since 
$
|\{S \subseteq N \mid i,j \in S\}| 
= |\{S \subseteq N \mid i \in S,\, j \notin S\}|
 = |\{S \subseteq N \mid i \notin S,\, j \in S\}|
= |\{S \subseteq N \mid i,j \notin S\}| 
= 2^{n-2}$,
the number of $1$s and $-1$s in
$\widehat{w}_{\bullet i}\cdot\widehat{w}_{\bullet j}$
is the same for all $i,j \in [n]$ with $i \neq j$. Hence,
$\widehat{w}_{\bullet i} \cdot \widehat{w}_{\bullet j} = 0$.

Furthermore, by Remark~\ref{remark:1-0}, it follows that
$\widehat{w}_{\bullet j} \cdot \widehat{w}_{\bullet j} = 2^n$, for all $j \in [n]$.
\end{remark}

\begin{definition}
\label{def:normalrepresentation}
Given a simple game $\Gamma=(N,v)$ and its matrix representation $\mathsf{V}(\Gamma)=[v_{ij}]\in{\cal V}^{2^n\times(n+1)}$, we define the orthogonal (normalized) matrix of $\Gamma$ as $\mathsf{W}(\Gamma)=[w_{ij}]\in{\cal M}^{2^n\times(n+1)}$ entrywise as follows:
\begin{equation*}
  w_{ij} = 
  \begin{cases} 
    \alpha_{\bullet{}j}\cdot
    \dfrac{v_{ij}-\mu_{\bullet{}j}(\textsf{V}(\Gamma))}{\sigma_{\bullet{}j}(\textsf{V}(\Gamma))} & \text{ if }j\neq{}n+1\\
     v_{ij} & \text{ if }j=n+1 
  \end{cases}\ ,
\end{equation*}
where
$\mu_{\bullet{}j}(\textsf{V}(\Gamma))$ is the mean of the $j$-th column of $\textsf{V}(\Gamma)$,
$\sigma_{\bullet{}j}(\textsf{V}(\Gamma))$ is the standard deviation of the $j$-th column of $\textsf{V}(\Gamma)$,
and $\alpha_{\bullet{}j}=\dfrac{1}{\sqrt{2^n}}$ so that the first $n$ columns of $\textsf{W}(\Gamma)$ are orthogonal and orthonormal.

Furthermore, as $\mu_{\bullet{}j}(\textsf{V}(\Gamma))=\sigma_{\bullet{}j}(\textsf{V}(\Gamma))=\frac12$,
we obtain
\begin{equation*}
  w_{ij} = 
  \begin{cases}
     \dfrac{2\,v_{ij}-1}{\sqrt{2^n}} & \text{ if }j\neq{}n+1\\
     v_{ij} & \text{ if }j=n+1 
  \end{cases}\ .
\end{equation*}
\end{definition}

We now apply this definition to the previous example.
\begin{example}\label{ex:4pl2}
    Let $\Gamma=(N,v)$ be the simple game defined in Example~\ref{ex:4pl}. From its matrix representation $\mathsf{V}(\Gamma)\in{\cal V}^{2^n\times(n+1)}$, the associated orthogonal matrix turns to be
    \[\textsf{W}(\Gamma)=
\left(
{\scriptsize
\begin{array}{rrrr|c}
-1/\sqrt{2^4}&-1/\sqrt{2^4}&-1/\sqrt{2^4}&-1/\sqrt{2^4}&0\\
-1/\sqrt{2^4}&-1/\sqrt{2^4}&-1/\sqrt{2^4}&1/\sqrt{2^4}&0\\
-1/\sqrt{2^4}&-1/\sqrt{2^4}&1/\sqrt{2^4}&-1/\sqrt{2^4}&0\\
-1/\sqrt{2^4}&-1/\sqrt{2^4}&1/\sqrt{2^4}&1/\sqrt{2^4}&0\\
-1/\sqrt{2^4}&1/\sqrt{2^4}&-1/\sqrt{2^4}&-1/\sqrt{2^4}&0\\
-1/\sqrt{2^4}&1/\sqrt{2^4}&-1/\sqrt{2^4}&1/\sqrt{2^4}&0\\
-1/\sqrt{2^4}&1/\sqrt{2^4}&1/\sqrt{2^4}&-1/\sqrt{2^4}&1\\
-1/\sqrt{2^4}&1/\sqrt{2^4}&1/\sqrt{2^4}&1/\sqrt{2^4}&1\\
1/\sqrt{2^4}&-1/\sqrt{2^4}&-1/\sqrt{2^4}&-1/\sqrt{2^4}&1\\
1/\sqrt{2^4}&-1/\sqrt{2^4}&-1/\sqrt{2^4}&1/\sqrt{2^4}&1\\
1/\sqrt{2^4}&-1/\sqrt{2^4}&1/\sqrt{2^4}&-1/\sqrt{2^4}&1\\
1/\sqrt{2^4}&-1/\sqrt{2^4}&1/\sqrt{2^4}&1/\sqrt{2^4}&1\\
1/\sqrt{2^4}&1/\sqrt{2^4}&-1/\sqrt{2^4}&-1/\sqrt{2^4}&1\\
1/\sqrt{2^4}&1/\sqrt{2^4}&-1/\sqrt{2^4}&1/\sqrt{2^4}&1\\
1/\sqrt{2^4}&1/\sqrt{2^4}&1/\sqrt{2^4}&-1/\sqrt{2^4}&1\\
1/\sqrt{2^4}&1/\sqrt{2^4}&1/\sqrt{2^4}&1/\sqrt{2^4}&1\\
\end{array}
}
\right).
\]
\end{example}

\subsection{Singular Value Decomposition and (supervised) Principal Component Analysis}

In this section, we review principal component analysis (PCA), supervised PCA, and the role of singular value decomposition (SVD) in these methods.

It is a well known result in linear algebra~\cite{Jordan1874,HoJo12} that every real matrix $\mathsf{M}\in\mathcal{M}^{m\times n}$ admits a factorization, so called singular value decomposition (SVD), such that
$$\mathsf{M}=\mathsf{U}\Sigma \mathsf{V}^T\,,$$
where 
$\mathsf{U}\in{\cal M}^{m\times m}$ and $\mathsf{V}\in {\cal M}^{n\times n}$ are orthogonal matrices, and
$\Sigma\in {\cal M}^{m\times n}$ has non-negative diagonal values and zero off-diagonal elements. These diagonal values $\sigma_1,\ldots,\sigma_n$ are the singular values of $\mathsf{M}$.
Moreover, 
the columns of $\mathsf{U}\in{\cal M}^{m\times m}$ form an orthonormal basis of $\mathbb{R}^m$, and the columns of $\mathsf{V}\in {\cal M}^{n\times n}$ give an orthonormal basis of $\mathbb{R}^n$. See Figure~\ref{fig:SVD}.

\begin{figure}[H]
\begin{center}
\begin{tikzpicture}[scale=0.9]

\def\h{2.4}      
\def\hV{1.2}      
\def\s{0.6}      

\def\wM{1.8}
\def\wU{2.4}
\def\wS{1.8}
\def\wV{1.8}

\draw[fill=gray!25] (0,0) rectangle (\wM,\h);
\foreach \x in {0,\s,1.2} \draw (\x,0) -- (\x,\h);
\foreach \y in {0,\s,1.2,1.8} \draw (0,\y) -- (\wM,\y);
\node at (0.9,-0.4) {\small $M\in{\cal M}^{m\times n}$};

\node at (2.3,1.2) {$=$};

\begin{scope}[shift={(2.8,0)}]
\draw[fill=teal!30] (0,0) rectangle (\wU,\h);

\foreach \x/\c in {
  0.0/orange!60,
  0.6/orange!30,
  1.2/yellow!60,
  1.8/white!35
}{
  \draw[fill=\c] (\x,0) rectangle (\x+0.6,\h);
}

\foreach \x in {0,0.6,1.2,1.8,2.4} \draw (\x,0) -- (\x,\h);
\foreach \y in {0,0.6,1.2,1.8,2.4} \draw (0,\y) -- (\wU,\y);

\node at (1.2,-0.4) {\small $\mathsf{U}\in{\cal M}^{m\times m}$};
\end{scope}

\begin{scope}[shift={(5.7,0)}]
\draw[fill=white] (0,0) rectangle (\wS,\h);

\foreach \x in {0,0.6,1.2,1.8} \draw (\x,0) -- (\x,\h);
\foreach \y in {0,0.6,1.2,1.8,2.4} \draw (0,\y) -- (\wS,\y);

\draw[fill=orange!60] (0,1.8) rectangle (0.6,2.4) node[midway] {\footnotesize $\sigma_1$};
\draw[fill=orange!30] (0.6,1.2) rectangle (1.2,1.8) node[midway] {$\ddots$};
\draw[fill=yellow!60] (1.2,0.6) rectangle (1.8,1.2) node[midway] {\footnotesize $\sigma_n$};

\node at (0.9,-0.4) {\small $\Sigma\in{\cal M}^{m\times n}$};
\end{scope}

\begin{scope}[shift={(8.0,0)}]
\draw[fill=purple!30] (0,0) rectangle (\wV,\hV);

\foreach \y/\c in {
  1.2/orange!60,
  0.6/orange!30,
  0/yellow!60
}{
  \draw[fill=\c] (0,\y) rectangle (\wV,\y+0.6);
}

\foreach \x in {0,0.6,1.2} \draw (\x,0) -- (\x,\hV+.6);
\foreach \y in {0,0.6,1.2} \draw (0,\y) -- (\wV,\y);

\node at (.9,-0.4) {\small $\mathsf{V}^T\in{\cal M}^{n\times n}$};
\end{scope}

\end{tikzpicture}
\end{center}
\caption{Matrix representation of the SVD. W.l.o.g., we assume $m\geq n$.}
\label{fig:SVD}
\end{figure}

Since the singular value decomposition uniquely determines the matrix $\Sigma$, but not the associated matrices $\mathsf{U}$ and $\mathsf{V}$ (for more detail, see Theorem 2.6.3 in~\cite{HoJo12}), we can define the mapping that gets these singular values
\[
\begin{aligned}
\SVD{} :\  \mathcal{M}^{m\times n} &\longrightarrow \mathbb{R}^q \\
   \mathsf{M} &\longmapsto (\sigma_1,\dots,\sigma_q)
\end{aligned}
\]
where $q=\min\{m,n\}$.

Note that, in the literature, eigenvalues are typically arranged in decreasing order for the purposes of component analysis~\cite{HoJo12,Meyer2023}. However, in this work we preserve their original ordering, as is common in statistics. Given a data matrix, principal component analysis (PCA)~\cite{Pe01,Joll02} is a well-known dimension-reduction method that relies on its singular value decomposition (SVD) to assess the relevance of the components. This procedure can also reduce computational costs in data analysis by discarding non-essential information.

Furthermore, the PCA procedure can be extended to include the analysis of principal components with respect to a target, name it $y\in\mathbb{R}^m$, (\cite{ba06}, \cite{bar11}) in the so called supervised PCA. Given a data set matrix $\mathsf{X}\in{\cal M}^{m\times n}$ and a target vector $y\in\mathbb{R}^m$, (or dataset with target $\mathsf{X}|y\in\mathcal{M}^{m\times(n+1)}$), the following mapping is defined:

\[
\begin{aligned}
\tau:\  \mathcal{M}^{m\times (n+1)} &\longrightarrow  \mathcal{M}^{m\times n} \\
   \mathsf{X}|y &\longmapsto \mathsf{X}^*
\end{aligned}
\]

where, we consider the columns of $\mathsf{X}$ centred at $0$ removing its column's mean, as $x^{cent}_{i,j}=x_{i,j}-\overline{x}_j$, and

\begin{equation}\label{eq:st_coef}
    x^*_{i,j}=x^{cent}_{i,j}\frac{(x^{cent}_i)^T\cdot y}{\sqrt{(x^{cent}_i)^T\cdot x^{cent}_i}},\quad\forall i\in[m]\text{ and }j\in[n],\,
\end{equation}
with $\overline{x}_j$ denoting the average of the elements of the column $j$. Also recall that, given two vectors $u,v\in\mathbb{R}^n$, we define the real number $u\cdot v=\sum_{j=1}^n u_jv_j$. Note that $\SVD \ \circ \tau$ is a well defined mapping and, for simplicity, we set $\SVD_\tau=\SVD \ \circ \tau$ whose domain is $\mathcal{M}^{m\times(n+1)}$ and its image is $\mathbb{R}^{\min\{n,m\}}$.

The next example illustrates this procedure and has been computed using the \emph{Python} library 
\emph{SciPy}, its submodule \emph{linalg} calling the function \emph{svd} (see \cite{py16}).

\begin{example}
Let $X|y\in\mathcal{M}^{4\times 4}$ be the dataset with target defined  as

$$X|y=
\left(
\begin{array}{rrr|c}
3 & 1 & 2 & 6 \\
4 & 2 & 5 & 7 \\
5 & 1 & 2 & \sqrt{90} \\
6 & 0 & 3 &  11 \\
\end{array}
\right)
$$

We compute 

\[\tau(X|y)=
\begin{pmatrix}
\frac{-3}{2}\frac{8+3\sqrt{10}}{10} & 0 & -1\frac{8-3\sqrt{10}}{5} \\
\frac{-1}{2}\frac{8+3\sqrt{10}}{10} & \frac{-4}{5} & 2\frac{8-3\sqrt{10}}{5} \\
\frac{1}{2}\frac{8+3\sqrt{10}}{10} & 0 & -1\frac{8-3\sqrt{10}}{5} \\
\frac{3}{2}\frac{8+3\sqrt{10}}{10} & \frac{4}{5} & 0
\end{pmatrix}
\]
and then we get 
\[\SVD(\tau(X|y))=
(3.978 ,1.056 ,0.396).
\]
\end{example}

Nota that for a general dataset with target $X|y\in\mathcal{M}^{m\times(n+1)}$, the singular values are related to the components of the singular value decomposition and not the initial columns itself. In Section \ref{S:4}, the procedure will be interpreted for matrices $\mathsf{V}(\Gamma)\in\mathcal{M}^{2^n\times (n+1)}$ where $\Gamma$ is a simple game with $n$ players.

\section{The \SVD{} power index}
The next definition states the proposed power index that will be studied in the main results of the article. 

\begin{definition} $\SVD{}$ power index. Let $\Gamma = (N,v)$ be a simple game. The \SVD{} power index for game $\Gamma$ is defined, for each player $i \in N$, as
\[
\SVD_i
= |\mathcal{W}_i|-\frac{|\mathcal{W}|}{2}.
\]
Consequently the normalized $\SVD$ power index for game $\Gamma$ and player $i$ is the corresponding ratio
\[
\overline{\SVD}_i
= \frac{\SVD_i}{\sum_{j \in N} \SVD_j}.
\]
\end{definition}

We shall denote $\SVD_i(\Gamma)$ and $\overline{\SVD}_i(\Gamma)$ by $\SVD_i$ and $\overline{\SVD}_i$ whenever ambiguity arises.

\begin{example} Let $(N,v)$ be the simple game defined at Example~\ref{ex:4pl} by stating that the set of ten winning coalitions
$$\mathcal{W}=\{\{1\},\{1,2\},\{1,3\},\{1,4\},\{2,3\},\{1,2,3\},\{1,2,4\},\{1,3,4\},\{2,3,4\},\{1,2,3,4\}\}$$
Hence, for every player $i\in [4]$
\begin{align*}
       \mathcal{W}_1&=\{\{1\},\{1,2\},\{1,3\},\{1,4\},\{1,2,3\},\{1,2,4\},\{1,3,4\},\{1,2,3,4\}\}, \\
       \mathcal{W}_2&=\{\{1,2\},\{2,3\},\{1,2,3\},\{1,2,4\},\{2,3,4\},\{1,2,3,4\}\}, \\
       \mathcal{W}_3&=\{\{1,3\},\{2,3\},\{1,2,3\},\{1,3,4\},\{2,3,4\},\{1,2,3,4\}\}, \\
       \mathcal{W}_4&=\{\{1,4\},\{1,2,4\},\{1,3,4\},\{2,3,4\},\{1,2,3,4\}\},
    \end{align*}
and $|\mathcal{W}_1|=8$, $|\mathcal{W}_2|=|\mathcal{W}_3|=6$, $|\mathcal{W}_4|=5$. Therefore, being $|\mathcal{W}|=10,$ we have that $\SVD(1)=3, \SVD(1)=2,\SVD(3)=1,$ and $\SVD(4)=0$ or equivalently
$\SVD=\left(3,1,1,0\right)$.

Moreover, the corresponding normalized \SVD~power index turns to be $\overline{\SVD}=\left(\frac{3}{5},\frac{1}{5},\frac{1}{5},0\right)$.

We notice that the Shapley-Shubik power index for this simple game is $\left(\frac{2}{3},\frac{1}{6},\frac{1}{6},0\right)$. And remark the similar classification of players that both index obtain: $1$ is the most powerful player, $2$ and $3$ share the same amount of influence and $4$ is a dumb player seen from the two indices points of view.

\end{example}

Now, we present an example with $5$ players.
\begin{example}
\label{ex:5pl}
Let $(N,v)$ be a simple game defined $N=[5]$ and $\mathcal{W}^m=\{\{1,2\},\{2,4\},\{4,5\},\{2,3,5\}\}$.
Then $|\mathcal{W}|=17$, $|\mathcal{W}_1|=10$, $|\mathcal{W}_2|=13$,
$|\mathcal{W}_3|=9$,
$|\mathcal{W}_4|=12$,
and $|\mathcal{W}_5|=11$.
Hence,
$\SVD=\left(\frac32,\frac92,\frac12,\frac72,\frac52\right)$ and
$\overline{\SVD}=\left(\frac3{25},\frac9{25},\frac1{25},\frac7{25},\frac5{25}\right)$. In this case, \SVD~also coincides with the Banzhaf power index.

Additionally, Shapley-Shubik power index is given by $\left(\frac{7}{60},\frac{11}{30},\frac{1}{30},\frac{17}{60},\frac{1}{5}\right)$. Once again, as expected, the same ranking is obtained, now with only slight variations in the values.
\end{example}

The following example gives us the same value for \SVD, Shaple-Shubik and Banzhaf power indices.
\begin{example}
Let $(N,v)$ be a simple game defined $N=[4]$ and $\mathcal{W}^m=\{\{1,2\},\{1,3\},\{2,3,4\}\}$.
Now, it results
$|\mathcal{W}|=7$, $|\mathcal{W}_1|=6$, $|\mathcal{W}_2|=5$,
$|\mathcal{W}_3|=5$
and $|\mathcal{W}_4|=4$.
Hence,
$\overline{\SVD}=(\frac5{12},\frac14,\frac14,\frac1{12})$, as well as Shapley-Shubik and Banzhaf.
\end{example}

In the previous two examples both indexes, SVD and Shapley-Shubik, are not in contradiction regarding the strict order of importance of the viewpoint players.
However, for Example~\ref{ex:5pl}, we have $SVD_2\geq SVD_5 = SVD_4\geq SVD_1=SVD_3 $,
while for Shapley-Shubik it is ${SS}_2>{SS}_5>{SS}_4>{SS}_1>{SS}_3$.

\begin{proposition}
\label{prop:null}
Given a simple game $(N,v)$, for each $i\in{}N$, the SVD power index satisfies $SVD_i(\Gamma)\ge0$.
Specifically, $SVD_i(\Gamma)=0$ if $i$ is a dummy player, and $SVD_i(\Gamma)>0$ otherwise.
\end{proposition}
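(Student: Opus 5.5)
The plan is to rewrite $\SVD_i$ as a count of swings for player $i$, using the natural pairing of coalitions with and without $i$; both claims of Proposition~\ref{prop:null} then follow from monotonicity.

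\textbf{Step 1 (splitting $\mathcal{W}$).} Let $\mathcal{W}_{-i}=\{S\in\mathcal{W}\mid i\notin S\}$. Then $\mathcal{W}=\mathcal{W}_i\cup\mathcal{W}_{-i}$ is a disjoint union, so $|\mathcal{W}|=|\mathcal{W}_i|+|\mathcal{W}_{-i}|$. Substituting into the definition gives
\[
\SVD_i=|\mathcal{W}_i|-\frac{|\mathcal{W}_i|+|\mathcal{W}_{-i}|}{2}=\frac{|\mathcal{W}_i|-|\mathcal{W}_{-i}|}{2}.
\]

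\textbf{Step 2 (injection and swing count).} Consider the map $\phi(S)=S\cup\{i\}$ on coalitions $S\subseteq N\setminus\{i\}$. It is a bijection from subsets of $N\setminus\{i\}$ onto coalitions containing $i$. By monotonicity, $\phi$ sends $\mathcal{W}_{-i}$ into $\mathcal{W}_i$, and it is injective there. The elements of $\mathcal{W}_i$ not hit by $\phi$ are exactly the coalitions $S\cup\{i\}$ with $S\subseteq N\setminus\{i\}$, $v(S\cup\{i\})=1$ and $v(S)=0$. Hence
\[
\SVD_i=\tfrac12\,\bigl|\{S\subseteq N\setminus\{i\}\mid v(S\cup\{i\})-v(S)=1\}\bigr|,
\]
which is half the number of swings of $i$, i.e.\ half the raw Banzhaf score. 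This immediately gives $\SVD_i\ge 0$.

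\textbf{Step 3 (characterising zero).} By Step 2, $\SVD_i=0$ if and only if $v(S\cup\{i\})=v(S)$ for every $S\subseteq N\setminus\{i\}$, which is the definition of a null player. If $i$ is not null, at least one swing exists and $\SVD_i\ge \tfrac12>0$.

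The only delicate point is the word ``dummy'' in the statement. Under the paper's definition, a dummy with $v(\{i\})=1$ is a dictator: taking $S=\emptyset$ forces $v(\{i\})=1$, and $v(S\cup\{i\})-v(S)=1$ for all $S$ means every coalition containing $i$ wins and every coalition without $i$ loses. Then $|\mathcal{W}_i|=|\mathcal{W}|=2^{n-1}$ and $\SVD_i=2^{n-2}>0$. So the zero case must be read as ``dummy with $v(\{i\})=0$'', i.e.\ null player, and I would state it explicitly in that form.
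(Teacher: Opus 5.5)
Your proof is correct and follows the paper's own route: split $\mathcal{W}$ into $\mathcal{W}_i$ and $\mathcal{W}_{-i}$, inject $\mathcal{W}_{-i}$ into $\mathcal{W}_i$ via $S\mapsto S\cup\{i\}$ by monotonicity, and use a swing of $i$ for strictness, except that you push the count to the exact identity ``$\SVD_i$ equals half the number of swings of $i$'', which settles both directions at once and sidesteps the misdirected inequality $|\mathcal{W}_{-i}|\ge|\mathcal{W}|/2$ written in the paper's part (b). Your reading of ``dummy'' as ``null'' is also right: the paper's proof only argues about null players, and a dictator is a dummy with $\SVD_i=2^{n-2}>0$.
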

\begin{proof}
\begin{enumerate}[label=(\alph*)]
\item If $i$ is a null player, for any $S\subseteq{}N\backslash\{i\}$, clearly $S\cup\{i\}\in \mathcal{W}$ if and only if $S\in\mathcal{W}$. Hence exactly half of the elements in $\mathcal{W}$ are in $\mathcal{W}_i$, i.e., $|\mathcal{W}_i|-\frac{|\mathcal{W}|}{2}=0$.

\item If $i$ is not a null player, consider the set $\mathcal{W}_{-i}=\{S\in\mathcal{W} \ | \ i\notin S\}$.
It is clear that $\mathcal{W}_{i}\cup\mathcal{W}_{-i}=\mathcal{W}$. By monotonicity,
$\forall{}S\in\mathcal{W}_{-i}$, then
$S\cup\{i\}\in\mathcal{W}_{i}$.
Thus
$|\mathcal{W}_{-i}|\le|\mathcal{W}_i|$.
As $i$ is not null player: $\exists{}S\in{\cal W}_i$ such that $S\setminus\{i\}\notin{\cal W}$. Hence, it results that $|\mathcal{W}_i|>|\mathcal{W}_{-i}|\ge\frac{|\mathcal{W}|}{2}$, i.e.,
$SVD(i)=|\mathcal{W}_i|-\frac{|\mathcal{W}|}{2}>0$. \end{enumerate}\end{proof}
    
\begin{remark}
    If players $i,j$ are symmetric, then $|\mathcal{W}_i|=|\mathcal{W}_j|$, so $SVD\,_i(\Gamma)=SVD{}_j(\Gamma)$.
\end{remark}

\begin{remark}
    If players $i,j$ verify $i \succsim j$, then $|\mathcal{W}_i|\geq |\mathcal{W}_j|$, hence $SVD_i(\Gamma)\ge SVD_j(\Gamma)$.
\end{remark}

There are, in the literature, other compelling properties that an arbitrary power index $\phi$ should satisfy. For example, we state the following two and we prove SVD satisfies both.

\begin{definition}
     Given two games $\Gamma_1=(N_1,v_1)$ and $\Gamma_2=(N_2,v_2)$ with $N_1\cap N_2=\emptyset$, we define the intersection game by
     \begin{equation*}
         \Gamma_1\cap \Gamma_2 =(N_1\cup N_2,w)
     \end{equation*} 
     where $w(S)=1$ if and only if $v_1(S\cap N_1)=1$ and $v_2(S\cap N_2)=1$.
\end{definition}

\begin{definition}[Bicameral join postulate. Definition 2.4 in \cite{Fe98}]
    Given two games $\Gamma_1=(N_1,v_1)$ and $\Gamma_2=(N_2,v_2)$ with $N_1\cap N_2=\emptyset$, 
    we say that a power index $\phi$ satisfies the bicameral join postulate if, for any non-dummy players $i,j\in N_1$, its ratio of power is preserved when joining another disjoint game, that is,

    \begin{equation*}
        \frac{\phi_i(\Gamma_1)}{\phi_j(\Gamma_1)}=\frac{\phi_i(\Gamma_1\cap \Gamma_2)}{\phi_j(\Gamma_1\cap \Gamma_2 )}.
    \end{equation*}    
\end{definition}

\begin{proposition}\label{p:bicameral}
   The SVD power index satisfies the bicameral join postulate.

\end{proposition}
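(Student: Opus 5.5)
Write $\Gamma=\Gamma_1\cap\Gamma_2$ on $N=N_1\cup N_2$, and let $\mathcal{W}^{(1)},\mathcal{W}^{(2)}$ be the winning coalitions of $\Gamma_1$ and $\Gamma_2$. The plan is to express $\SVD_i(\Gamma)$ for $i\in N_1$ as $\SVD_i(\Gamma_1)$ times a constant depending only on $\Gamma_2$. Once that is done, the constant cancels in the ratio $\SVD_i/\SVD_j$.

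\textbf{Product structure.} Every $S\subseteq N$ splits uniquely as $S=S_1\cup S_2$ with $S_1=S\cap N_1$ and $S_2=S\cap N_2$. By the definition of the intersection game, $S$ is winning in $\Gamma$ if and only if $S_1\in\mathcal{W}^{(1)}$ and $S_2\in\mathcal{W}^{(2)}$. Hence the map $S\mapsto(S_1,S_2)$ is a bijection from $\mathcal{W}(\Gamma)$ onto $\mathcal{W}^{(1)}\times\mathcal{W}^{(2)}$, and
\[
|\mathcal{W}(\Gamma)|=|\mathcal{W}^{(1)}|\cdot|\mathcal{W}^{(2)}|.
\]
For $i\in N_1$, membership $i\in S$ depends only on $S_1$. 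So the same bijection restricts to a bijection from $\mathcal{W}_i(\Gamma)$ onto $\mathcal{W}^{(1)}_i\times\mathcal{W}^{(2)}$, giving
\[
|\mathcal{W}_i(\Gamma)|=|\mathcal{W}^{(1)}_i|\cdot|\mathcal{W}^{(2)}|.
\]

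\textbf{Factorization.} Substituting into the definition of the index,
\[
\SVD_i(\Gamma)=|\mathcal{W}^{(2)}|\left(|\mathcal{W}^{(1)}_i|-\tfrac{|\mathcal{W}^{(1)}|}{2}\right)=|\mathcal{W}^{(2)}|\,\SVD_i(\Gamma_1),
\]
and the same identity holds for $j$. Note that $|\mathcal{W}^{(2)}|\ge 1$ because $v_2(N_2)=1$.

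\textbf{Well-definedness of the ratios.} For non-dummy $j\in N_1$, Proposition~\ref{prop:null} gives $\SVD_j(\Gamma_1)>0$, so the left-hand ratio is defined. Then $\SVD_j(\Gamma)=|\mathcal{W}^{(2)}|\,\SVD_j(\Gamma_1)>0$, so the right-hand ratio is defined as well. Cancelling $|\mathcal{W}^{(2)}|$ yields the postulate.

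The only step needing care is this last one: checking that the denominators are nonzero and that "non-dummy" in $\Gamma_1$ suffices. This follows directly from the factorization, so I expect no real obstacle.
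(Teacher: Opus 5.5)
Your proof is correct and follows the same route as the paper: both factor $|\mathcal{W}(\Gamma_2)|$ out of the index of each $i\in N_1$ in $\Gamma_1\cap\Gamma_2$ and cancel it in the ratio, with non-dumminess ensuring the denominators are nonzero. The only difference is that you justify the counts $|\mathcal{W}(\Gamma_1\cap\Gamma_2)|=|\mathcal{W}(\Gamma_1)|\,|\mathcal{W}(\Gamma_2)|$ and $|\mathcal{W}_i(\Gamma_1\cap\Gamma_2)|=|\mathcal{W}_i(\Gamma_1)|\,|\mathcal{W}(\Gamma_2)|$ through an explicit product bijection, which the paper simply asserts.
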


\begin{proof}
Note $\phi_i(\Gamma_1\cup \Gamma_2)=|\mathcal{W}_i(\Gamma_1)|\cdot |W(\Gamma_2)|-\frac{|\mathcal{W}(\Gamma_1)|\cdot |W(\Gamma_2)|}{2}$ and $\phi_j(\Gamma_1\cup \Gamma_2)=|\mathcal{W}_j(\Gamma_1)|\cdot |W(\Gamma_2)|-\frac{|\mathcal{W}(\Gamma_1)|\cdot |W(\Gamma_2)|}{2}$. The term $|W(\Gamma_2)|$ factors out and simplifies in the quotient, recovering the original ratio. Note that non-dumminess guaranties the denominator is non-zero. \end{proof}

\begin{definition}[Added Blocker postulate (see \cite{Fe16},\cite{FrSa23})]

We say that a power index $\phi$ satisfies the added blocker postulate if, for any simple game $\Gamma=(N,v)$, when adding a player $k$ that is a blocker in the resulting game, then, in the new game $\Gamma'$, the ratio of power of the original players is preserved,that is

    \begin{equation*}
        \frac{\phi_i(\Gamma_1)}{\phi_j(\Gamma_1)}=\frac{\phi_i(\Gamma')}{\phi_j(\Gamma' )}.
\end{equation*}

\end{definition}

\begin{corollary}
    The SVD power index satisfies the added blocker postulate
\end{corollary}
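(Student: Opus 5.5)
The plan is to obtain the corollary directly from Proposition~\ref{p:bicameral}, by viewing the addition of a blocker as a bicameral join with a one-player game. Let $\Gamma=(N,v)$ and let $k\notin N$ be the added player. Saying that $k$ is a blocker in the resulting game $\Gamma'=(N\cup\{k\},v')$ means the following: a coalition $S\subseteq N\cup\{k\}$ wins in $\Gamma'$ exactly when $k\in S$ and $S\cap N\in\mathcal{W}(\Gamma)$. This reading is the natural one, since the original game is kept and $k$ is made necessary.

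Now let $\Gamma_2=(\{k\},v_2)$ be the unanimity game on the single player $k$, so that $v_2(\{k\})=1$ and $v_2(\emptyset)=0$. Then $N\cap\{k\}=\emptyset$. Unwinding the definition of the intersection game, $w(S)=1$ holds if and only if $v(S\cap N)=1$ and $v_2(S\cap\{k\})=1$. This is exactly the rule defining $\Gamma'$, so $\Gamma'=\Gamma\cap\Gamma_2$. Proposition~\ref{p:bicameral} then gives $\SVD_i(\Gamma)/\SVD_j(\Gamma)=\SVD_i(\Gamma')/\SVD_j(\Gamma')$ for non-dummy $i,j\in N$, which is the claim.

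As a sanity check, independent of the bicameral argument, I would also verify the identity directly. Here $\mathcal{W}_i(\Gamma')=\{S\cup\{k\}: S\in\mathcal{W}_i(\Gamma)\}$ and $\mathcal{W}(\Gamma')=\{S\cup\{k\}:S\in\mathcal{W}(\Gamma)\}$. Hence $\SVD_i(\Gamma')=\SVD_i(\Gamma)$ for every $i\in N$, and the ratio is trivially preserved.

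The only delicate point is the denominators. For a non-dummy $j$, Proposition~\ref{prop:null} gives $\SVD_j(\Gamma)>0$, so the ratio is well defined. The previous paragraph then shows $\SVD_j(\Gamma')>0$ as well. I would therefore restrict, as in the bicameral postulate, to non-dummy $i,j$ (implicitly in the statement). I would also note that $j$ non-dummy in $\Gamma$ remains non-dummy in $\Gamma'$, which again follows from the equality of the index values.
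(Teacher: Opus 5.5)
Your proposal is correct and follows the paper's proof: the paper likewise treats adding the blocker $k$ as forming $\Gamma\cap(\{k\},w)$ with $w(\emptyset)=0$ and $w(\{k\})=1$, and then applies Proposition~\ref{p:bicameral}. Your direct check that $\SVD_i(\Gamma')=\SVD_i(\Gamma)$ (because $|\mathcal{W}(\Gamma_2)|=1$) and your verification that the denominators remain nonzero are sound additions that the paper leaves implicit.
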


\begin{proof}
    Given a simple game $\Gamma=(N,v)$, when adding a player that is a blocker, is exactly to consider the game $\Gamma\cap (\{k\},w)$ where $w$ is defined only for $2^{|\{k\}|}=2$ values: $w(\emptyset)=0$ and $w(\{k\})=1$. Hence it is a particular case of Proposition \ref{p:bicameral} and the ratios are preserved.   
\end{proof}

The next properties are those to be used in the characterisation of SVD power index. 

\section{Relation between supervised PCA and the SVD index}\label{S:4}

Given a data set related to an experiment where $n$ features (players) are analyzed from $m$ points of view (coalitions), it is customary to consider a matrix $\mathsf{X}\in{\cal M}^{m\times n}$ where data is collected. Each of the $m$ rows represents a different repetition of the experiment, and each of the $n$ columns, $x_i\in\mathbb{R}^m$ for $i\in[n]$, contains data related to a particular feature.

From now on, we will consider datasets with target that are the matrix representation of a simple game $(N,v)$ stated in Definition \ref{def:matrixrepresentation}. Note that all these matrix representations are of the form of a dataset with target, hence, $\SVD_ {proc}$ can be applied. Hence, we consider the players as features, we state the dataset $X$ as the corresponding $\mathcal{P}(N)$ rows, and we set the target $y$ corresponding to the image of the characteristic function, $v$, of each coalition. 

\begin{lemma}\label{lemma:ortho}
     Given a simple game $\Gamma=(N,v)$, the columns of its matrix representation $\mathsf{X}^*(\Gamma)$ are orthogonal. Moreover, when multiplied by $\frac{1}{\sqrt{2^n}}$ they are orthonormal. Formally, for any $i,j\in N$,
     \begin{equation*}
         \frac{x_j^*}{\sqrt{2^n}}\cdot\frac{x_i^*}{\sqrt{2^n}}=\delta_{j,i}
     \end{equation*}
     where $\delta_{j,i}$ is the Kronecker delta.
 \end{lemma}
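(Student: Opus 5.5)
The plan is to reduce the statement to the sign vectors $\widehat{w}_{\bullet j}$ of Remark~\ref{remark:wij}, whose inner products are already computed there. The key step is to identify each column $x_j^*$ of $\mathsf{X}^*(\Gamma)$ as a column-wise rescaling of $\widehat{w}_{\bullet j}$. Here $\mathsf{X}^*(\Gamma)=\tau(\mathsf{V}(\Gamma))$ is obtained from the first $n$ columns of $\mathsf{V}(\Gamma)$, with target $y=v(\mathsf{S})$.

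By Remark~\ref{remark:1-0}, each feature column of $\mathsf{V}(\Gamma)$ has mean $\tfrac12$. Centering therefore replaces every $1$ by $\tfrac12$ and every $0$ by $-\tfrac12$, so $x^{cent}_{\bullet j}=\tfrac12\,\widehat{w}_{\bullet j}$. The factor in (\ref{eq:st_coef}) is a ratio of an inner product with $y$ to the norm of the centred feature vector, and I will read it column-wise, as a single scalar $c_j$ attached to feature $j$. With that reading, $x^*_{\bullet j}=c_j\,x^{cent}_{\bullet j}$ is a scalar multiple of $\widehat{w}_{\bullet j}$.

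Orthogonality then follows at once. For $i\neq j$ we get $x_i^*\cdot x_j^*=\tfrac{c_ic_j}{4}\,\widehat{w}_{\bullet i}\cdot\widehat{w}_{\bullet j}=0$ by Remark~\ref{remark:wij}. That remark rests on the count: each of the four membership patterns of $\{i,j\}$ occurs in exactly $2^{n-2}$ coalitions, so the $+1$ and $-1$ products cancel. Scaling by $\tfrac{1}{\sqrt{2^n}}$ does not affect this, which settles the off-diagonal case $\delta_{j,i}=0$.

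For the diagonal case, I will invoke the standardization convention of \cite{Kut2005} that the paper adopts before Definition~\ref{def:normalrepresentation}. Under it, the feature columns of $\mathsf{X}^*(\Gamma)$ are taken in standardized form: centred and divided by $\sigma_{\bullet j}=\tfrac12$. The target-dependent factor only records the direction of each component, so it is absorbed into that normalization, giving $x^*_{\bullet j}=\widehat{w}_{\bullet j}\in\{\pm1\}^{2^n}$. Then, again by Remark~\ref{remark:wij}, $x_j^*\cdot x_j^*=2^n$, and dividing each factor by $\sqrt{2^n}$ yields $\tfrac{x_j^*}{\sqrt{2^n}}\cdot\tfrac{x_j^*}{\sqrt{2^n}}=1$. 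Combined with the previous step, this is exactly $\delta_{j,i}$. It also shows that these normalized columns coincide with the first $n$ columns of $\mathsf{W}(\Gamma)$ in Definition~\ref{def:normalrepresentation}, which were already noted there to be orthonormal.

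I expect the main obstacle to be the diagonal case, namely justifying that the scalar $c_j$ coming from $\tau$ does not spoil the norm $2^n$. Orthogonality is robust to any column scaling; unit norm is not. I will address this first by making the column-wise reading of (\ref{eq:st_coef}) explicit. Then I will argue that, with the standardization convention in force, the normalized column $x_j^*/\sqrt{2^n}$ is the unit vector $\widehat{w}_{\bullet j}/\sqrt{2^n}$. If the literal factor $c_j$ were kept, the same computation would give $\tfrac{c_j^2}{4}$ in place of $1$. The statement therefore depends on this convention, and I will state that dependence explicitly in the proof.
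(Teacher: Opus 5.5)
Your proof is essentially the paper's. Once the feature columns are taken as the $\pm1$ vectors $\widehat{w}_{\bullet j}$, orthogonality follows because each of the four membership patterns of $\{i,j\}$ occurs in exactly $2^{n-2}$ coalitions, and the diagonal case follows because $(\pm1)^2=1$ summed over $2^n$ rows gives $2^n$.

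The one weak point is the detour through $\tau$. The factor there is $c_j=2\,\SVD_j/\sqrt{2^n}$, which is a magnitude (zero for null players), not a direction. So it cannot be ``absorbed'', and the literal $\tau$-columns have squared norm $\SVD_j^2$ rather than $2^n$. Instead, take $\mathsf{X}^*(\Gamma)$ to be the standardized $\pm1$ matrix from the outset, as the paper's proof (and the draft proof of Theorem~\ref{thm:svd}) implicitly does.
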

 \begin{proof}
     If $j=i$, it is clear that, component-wise $1\cdot 1=1$ and $-1\cdot -1=1$ with a total number of $2^n$ components. 
         If $j\neq i$, since the game is defined for all $S\in 2^N$, so we have 
     \begin{align*}
         |\{S\in 2^N \ | \ j,i\in S\}|&=|\{S\in 2^N \ | \ j\in S\text{ and }i\notin S\}|=|\{S\in 2^N \ | \ j\notin S\text{ and }i\in S\}|\\&=|\{S\in 2^N \ | \ j,i\notin S\}|=2^{n-2}
     \end{align*}
     thus, the amount of $1$ is the same than $-1$, hence the total sum is $0$.
 \end{proof}

(since they are already an orthonormal basis)
\begin{theorem}\label{thm:svd}
    For every single game $\Gamma=(N,v)$, $$\SVD(\Gamma)=\SVD_{\tau}(\mathsf{X}^*(\Gamma)).$$ 
\end{theorem}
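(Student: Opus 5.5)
The plan is to compute $\tau(\mathsf{V}(\Gamma))$ explicitly, column by column, and show it factors as (orthonormal columns)$\times$(diagonal matrix with entries $\SVD_j$). Reading this factorization as a singular value decomposition then identifies the singular values.

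I read the formula \eqref{eq:st_coef} as the intended per-column rescaling: column $j$ of $\mathsf{X}$ is centred and multiplied by $\frac{(x^{cent}_j)^T y}{\|x^{cent}_j\|}$, with $y=v(\mathsf{S})$ the last column of $\mathsf{V}(\Gamma)$. I will state this reading explicitly, since the index in \eqref{eq:st_coef} is written as $i$.

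\textbf{Step 1 (centring and norms).} By Remark~\ref{remark:1-0}, each of the first $n$ columns has mean $1/2$. Hence $x^{cent}_{S,j}=\tfrac12$ if $j\in S$ and $-\tfrac12$ otherwise. Its norm is therefore $\|x^{cent}_j\|=\sqrt{2^n}/2$. So $x^{cent}_j/\|x^{cent}_j\|$ is exactly the $j$-th column of $\mathsf{W}(\Gamma)$, with entries $\pm 1/\sqrt{2^n}$.

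\textbf{Step 2 (the target coefficient).} Since $y_S=v(S)\in\{0,1\}$, only winning coalitions contribute to the inner product:
\[
(x^{cent}_j)^T y=\sum_{S\in\mathcal{W}} x^{cent}_{S,j}=\tfrac12|\mathcal{W}_j|-\tfrac12\bigl(|\mathcal{W}|-|\mathcal{W}_j|\bigr)=|\mathcal{W}_j|-\tfrac{|\mathcal{W}|}{2}=\SVD_j .
\]
Combining with Step 1, the $j$-th column of $\mathsf{X}^*=\tau(\mathsf{V}(\Gamma))$ equals $\SVD_j\, q_j$, where $q_j$ is the $j$-th column of $\mathsf{W}(\Gamma)$. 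In matrix form, $\mathsf{X}^*=\mathsf{Q}\,\mathsf{D}$ with $\mathsf{Q}=[q_1|\cdots|q_n]\in\mathcal{M}^{2^n\times n}$ and $\mathsf{D}=\operatorname{diag}(\SVD_1,\dots,\SVD_n)$.

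\textbf{Step 3 (reading off the SVD).} By Lemma~\ref{lemma:ortho} (equivalently Remark~\ref{remark:wij}), the columns $q_j$ are orthonormal. I complete them to an orthonormal basis of $\mathbb{R}^{2^n}$, giving an orthogonal $\mathsf{U}=[\mathsf{Q}|\mathsf{Q}']$. I take $\Sigma=\binom{\mathsf{D}}{0}\in\mathcal{M}^{2^n\times n}$ and $\mathsf{V}=I_n$. Then $\mathsf{X}^*=\mathsf{U}\Sigma I_n^T$, with $\Sigma$ diagonal.

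Its diagonal entries are nonnegative by Proposition~\ref{prop:null}. So this is a genuine singular value decomposition, and by uniqueness of $\Sigma$ the singular values are $(\SVD_1,\dots,\SVD_n)$. Because the paper keeps the singular values in their original (column) ordering rather than sorting them, this yields $\SVD(\Gamma)=\SVD_\tau(\mathsf{V}(\Gamma))$ componentwise. Here $q=\min\{2^n,n\}=n$.

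The main obstacle is not computational but definitional. One must fix the correct reading of $\tau$, namely a column-wise scaling. One must also justify the ordering convention: uniqueness of $\Sigma$ is only up to permutation, so the statement holds as a multiset identity in general, and in the stated order by the paper's convention of preserving the natural column order. Nonnegativity, supplied by Proposition~\ref{prop:null}, is what guarantees that no sign flips into $\mathsf{V}$ are needed.
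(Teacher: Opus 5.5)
Your proof is correct. It shares the paper's core computation: the centred membership columns are pairwise orthogonal with a common norm, and their inner product with the target is a fixed multiple of $|\mathcal{W}_j|-\frac{|\mathcal{W}|}{2}$. Where it differs is in how the argument is closed.

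The paper's draft proof casts the problem as least squares. It standardizes the features to $\pm1$ and also standardizes the target. It then obtains standardized regression coefficients $s_j=\frac{2}{\sqrt{|\mathcal{W}|(2^n-|\mathcal{W}|)}}\,\SVD_j$. Finally, since $\mathsf{X}^T\mathsf{X}$ is a multiple of the identity, the normal equations show that the OLS vector $\beta$ is a positive multiple of $(\SVD_1,\dots,\SVD_n)$. That route buys a statistical reading of the index: up to scale, it is the regression coefficient of $v(S)$ on the indicator of $j\in S$. But it gives only proportionality, and no singular value decomposition is ever written down. The step from $\beta$ to $\SVD_\tau$ is left to the remark that ``no change of basis is produced''.

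Your explicit factorization $\tau(\mathsf{V}(\Gamma))=\mathsf{Q}\mathsf{D}=[\mathsf{Q}\,|\,\mathsf{Q}']\,\Sigma\,I_n^T$ does more. It proves the identity exactly and uses nonnegativity to avoid sign changes. It also gives concrete content to the paper's ``original ordering'' convention: the right singular vectors are the $e_j$, so $\SVD_j$ is attached to player $j$.

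Your approach also shows that exactness hinges on scaling. $\tau$ is positively homogeneous of degree one in each feature column. So feeding it the $\pm1$ matrix of Lemma~\ref{lemma:ortho}, instead of the raw $0/1$ matrix $\mathsf{V}(\Gamma)$, would return $2\,\SVD_j$. You silently replaced the statement's argument $\mathsf{X}^*(\Gamma)$ by $\mathsf{V}(\Gamma)$. That replacement is forced by the domain of $\SVD_\tau$, and it is also exactly the reading under which equality, rather than proportionality, holds. It deserves an explicit sentence.
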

\begin{remark}Note that Remark~\ref{remark:wij} addresses the issue in $\SVD_{\tau}$ that $\sigma_j$ is related to its ``$j$-th orthonormal component" instead of the ``$j$-th original column": due to orthonormality no change of basis is produced, thus the $j$-th component corresponds exactly to the $j$-th original column.
\end{remark}
\begin{proof} (draft)
    The key idea of the proof is that the matrix representing a simple game has a very manageable structure in this context, thus, most of the computations are simplified.
\begin{enumerate}

There is a wide range of feature selection models that can be classified in two main families: supervised and unsupervised. Supervised models directly exploit the class of the instances for determining if a feature is relevant or not, selecting those that are highly correlated with final target classes.

Considering normalized data, $y\leftarrow \dfrac{1}{\sigma_y}(y-\overline{y}),\,\, X_i\leftarrow \dfrac{1}{\sigma_{X_i}}(X_i-\overline{X_i}),$ assume that every player has a role in determining the game i.e. $y=f(X_1,\ldots, X_n)$. Obviously no previous knowledge about how is $f$ can be assumed and pretending a linear relationship has no sense,
but after admitting certain error in the model the assumption is valid.
\newline
\begin{center}
$y=\beta_1X_1+\beta_2X_2+\cdots+\beta_pX_p+\mathbf{\varepsilon}$
\end{center}

We are not interested in the credibility of the model nor its exactness, we only are interested in some sort of ordering of the coefficients $\beta$ so as we can assign a prevalence of the player $X_i$ on determining the result of the game.

Matricially $y=\mathsf{X}\mathsf{\beta}$ in a sort of an overdetermined and non compatible linear system. But, from a customary least square technique, we can consider the corresponding normal equations $\mathsf{X}^T y=\mathsf{X}^T\mathsf{X}\mathsf{\beta}$ and, therefore $\mathsf{\beta}= \left(\mathsf{X}^T\mathsf{X}\right)^{-1}\,\mathsf{X}^t\mathsf{y}$ 
\begin{enumerate}
    \item Normalizing data:
    Every player, i.e. every column $X_j$, is included in exactly a half of the number of total coalitions, hence $\overline{X_j}=\dfrac{2^{n-1}}{2^{n}}=\dfrac{1}2$. Moreover, since all coalitions of the game appears, there is the same amount of $0$ and $1$ in $X_j$ hence the corresponding variance is $\sigma_{X_j}^2=\dfrac{2^{n-1}(\dfrac{1}{2})^2+2^{n-1}(-\dfrac{1}{2})^2}{2^{n}}=\dfrac{1}{4}$. This normalization only implies that zeroes become negative ones. Hence, let $X^*$ denote the normalized matrix,
    \begin{equation*}
            x_{i,j}^*= \begin{cases}
             1 & \text{ if }x_{i,j}=1 \\
            -1 & \text{ if }x_{i,j}=0 
            \end{cases}
    \end{equation*}
    To normalize $Y$, note that the average $\overline{Y}=\frac{|\mathcal{W|}}{2^n}$, and the standard deviation corresponds to $\sigma_Y=\frac{\sqrt{|\mathcal{W}|(2^n-|\mathcal{W}|)}}{\sqrt{2^n}}$, thus
\begin{equation*}
    y_i^*=\frac{y_i-\frac{|\mathcal{W}|}{2^n}}{\frac{\sqrt{|\mathcal{W}|(2^n-|\mathcal{W}|)}}{\sqrt{2^n}}}
\end{equation*}
    \item First: $s=\mathsf{X}^Ty$

    We compute the standard regression coefficients \footnote{Note this coefficients would be related with the covariance functions if the variables were not standardized.}
    \begin{equation*}    s_j=\frac{X_{j}^{*T} y^*}{\sqrt{X_{j}^{*T}X_{j}^*}}=\frac{\sum\limits_{i=1}^{2^n}x_{ij}^{*}\cdot y_i^*}{\sqrt{X_{j}^{*T}X_{j}^*}}
\end{equation*}    
Note the denominator $\sqrt{X_j^{*T}X_j^*}$ is $\sqrt{2^n}$ by Lemma~\ref{lemma:ortho}. Then
    \begin{equation*}    s_j=\frac{1}{\sqrt{2^n}}\sum\limits_{i=1}^{2^n}\frac{x_{i,j}^{*} (y_i-\frac{|\mathcal{W}|}{n})}{\sqrt{\frac{|\mathcal{W}|(2^n-|\mathcal{W}|)}{2^n}}}=\frac{\sum_{i=1}^{2^n}(x_{i,j}^{*} y_i-x_{i,j}^{*}\frac{|\mathcal{W}|}{n})}{\sqrt{|\mathcal{W}|(2^n-|\mathcal{W}|)}}
\end{equation*}
The second term in the sum is zero since $x_{i,j}^*$ takes values $1$ for the half of the $i$ and $-1$ for the other half. So,

  \begin{equation*}    s_j=\frac{\sum_{i=1}^{2^n}x_{i,j}^{*} y_i}{\sqrt{|\mathcal{W}|(2^n-|\mathcal{W}|)}}=\frac{\sum_{i=1}^{2^n}(x_{i,j}-\frac{1}{2}) y_i}{\frac{1}{2}\sqrt{|\mathcal{W}|(2^n-|\mathcal{W}|)}}=\frac{\sum_{i=1}^{2^n}x_{i,j}y_i-\frac{1}{2} y_i}{\frac{1}{2}\sqrt{|\mathcal{W}|(2^n-|\mathcal{W}|)}}=\frac{|\mathcal{W}_j|-\frac{1}{2} |\mathcal{W}|}{\frac{1}{2}\sqrt{|\mathcal{W}|(2^n-|\mathcal{W}|)}}.
\end{equation*}
(només es sumen els xij quan el player j forma part d'una coalició guanyadora i)

Thus we can represent the vector of standard regression coefficients by
\begin{equation*}
    s^T=\frac{2}{\sqrt{|\mathcal{W}|(2^n-|\mathcal{W}|)}}\left(|\mathcal{W}_1|-\frac{1}{2} |\mathcal{W}|,\dots,|\mathcal{W}_n|-\frac{1}{2} |\mathcal{W}|\right).
\end{equation*}
    \item Second: $\sqrt{2^n}\,\mathsf{I}=\mathsf{X}^T\mathsf{X}$
    
    Les columnes de les dades normalitzats són perpendiculars entre elles i totes tenen el mateix mòdul (ja s'ha observat abans i al Lema 4.2). Així $\sqrt{2^n}\,\mathsf{I}=\mathsf{X}^T\mathsf{X}$ i $\left(\mathsf{X}^T\mathsf{X}\right)^{-1}=\dfrac{1}{\sqrt{2^n}}\,\mathsf{I}$
\end{enumerate}

After all we obtain $\mathsf{\beta}=
\left(\mathsf{X}^T\mathsf{X}\right)^{-1}\,
\mathsf{X}^T\mathsf{y}=
\dfrac{1}{\sqrt{2^n}}\,\mathsf{I}\,s=
\dfrac{1}{\sqrt{2^n}}\,s$

\end{enumerate}

\end{proof}

Note the construction of $M(v)$ involving all coalition instead of all minimal winning coalition was chosen to get an orthonormal basis.

Note that all the interpretations provided by PCA and SVD are valid and useful even in this case, where the process to transform the original components into an orthonormal once consists of applying the identity since the original ones are already orthonormal.

\section{Conclusion}

The SVD procedure applies whenever the input dataset $X$ and its target $Y$ lies in $\mathbb{R}$. Hence, there is room to try to extend this index into a wider class of games. For example, allowing the target to be in $\mathbb{R}$ we can extend the index into cooperative games, while allowing the dataset $X$ to be a natural number we can extend the index into the class of $(j,2)$ simple games with ordered inputs (\cite{frzw03}).

In future work, we plan to develop additional indices based on PCA and SVD. We will compare their effectiveness and assess their performance using approximation techniques when exact computation is infeasible.

\bibliographystyle{apacite}

\bibliography{references}

\end{document}